\newcommand{\citep}[1]{\citeauthor{#1} [\citeyear{#1}]}
\newtheorem{theorem}{Theorem}
\DeclarePairedDelimiter\ceil{\lceil}{\rceil}
\DeclarePairedDelimiter\floor{\lfloor}{\rfloor}
\newcommand*{\affaddr}[1]{#1} % No op here. Customize it for different styles.
\newcommand*{\affmark}[1][*]{\textsuperscript{#1}}
\newcommand*{\email}[1]{\texttt{#1}}
\begin{document}
% The file aaai.sty is the style file for AAAI Press 
% proceedings, working notes, and technical reports.
%
\title{Sequence-to-point learning with neural networks for non-intrusive load monitoring}
\author{AAAI Press\\
Association for the Advancement of Artificial Intelligence\\
2275 East Bayshore Road, Suite 160\\
Palo Alto, California 94303\\
}
\author{
Chaoyun Zhang\affmark[1], Mingjun Zhong\affmark[2], Zongzuo Wang\affmark[1], Nigel Goddard\affmark[1], and Charles Sutton\affmark[1]\\
\affaddr{
\affaddr{\affmark[1]School of Informatics, University of Edinburgh, United Kingdom}\\
\email{chaoyun.zhang@ed.ac.uk}, \email{\{ngoddard,csutton\}@inf.ed.ac.uk}\\
\affmark[2]School of Computer Science, University of Lincoln, United Kingdom}\\
\email{mzhong@lincoln.ac.uk}\\
}
\maketitle
\begin{abstract}
Energy disaggregation (a.k.a nonintrusive load monitoring, NILM), a
single-channel blind source separation problem, aims to decompose the
mains which records the whole house electricity consumption into
appliance-wise readings. This problem is difficult because it is
inherently unidentifiable. Recent approaches have shown that the
identifiability problem could be reduced by introducing domain
knowledge into the model. Deep neural networks have been shown to be a
promising approach for these problems, but sliding windows are
necessary to handle the long sequences which arise in signal
processing problems, which raises issues about how to combine
predictions from different sliding windows.  In this paper, we propose
sequence-to-point learning, where the input is a window of the mains
and the output is a single point of the target appliance. We use
convolutional neural networks to train the model. Interestingly, we
systematically show that the convolutional neural networks can
inherently learn the signatures of the target appliances, which are
automatically added into the model to reduce the identifiability
problem. We applied the proposed neural network approaches to
real-world household energy data, and show that the methods achieve
state-of-the-art performance, improving two standard error measures by
84\% and 92\%.
\end{abstract}

\noindent Energy disaggregation \cite{hart92} is a single-channel blind source separation (BSS) problem
that aims to decompose the whole energy consumption of a dwelling
into the energy usage of individual appliances. 
The purpose is to help households to reduce their energy consumption
by helping them to understand what is causing them to use energy, and 
it has been shown that  disaggregated information can help  householders to reduce energy consumption by 
as much as $5-15\%$ \cite{fischer2008feedback}.
%The normal energy disaggregation task aims to separate the energy consumption of sub-appliance from a aggregated meter. It has long been an important problem since 1984 \cite{hart1984nonintrusive}, but recently drawn more attention from public dues to the environment issues. A recent survey shows the awareness of appliance-by-appliance disaggregated information can enable consumers to reduce consumption by $5-15\%$ \cite{fischer2008feedback}.
However, current electricity meters can only report the whole-home consumption data. This triggers the demand of machine-learning tools to infer the appliance-specific consumption.

Energy disaggregation is unidentifiable and thus a difficult prediction problem because it is a single-channel BSS problem; we want to extract more than one source from a single observation. 
Additionally, there are a large number of sources of uncertainty in the prediction problem,
including noise in the data, lack of knowledge of the true power usage for every appliance in a given household,
 multiple devices exhibiting similar power consumption, and simultaneous switching on/off of multiple devices.
Therefore energy disaggregation has been an active area for the application of artificial intelligence
and machine learning techniques.
%For example, in an ordinary house, there are maybe more than twenty kinds of appliances have been turned on in one-day period, which makes it more challenging to capture the dependency between appliances and observed readings. Other factors include lack of knowledge about the number of energy level for certain appliances, multiple devices exhibiting similar energy consumption, simultaneous switchings on/off of multiple devices, and rare operations of some appliances. All these problems are critical since only the aggregated readings are observable.
%The first touch in this area can be traced back to 1984 \cite{hart1984nonintrusive}, which classified the appliance behaviours naively by their stable states. 
Popular approaches have been based on factorial hidden Markov models (FHMM) \cite{kolter2012approximate,Parson12,zhong13,zhong2014signal,zhong2015latent,lange2016efficient} and signal processing methods \cite{pattem2012unsupervised,Zhao15,Zhao16,Batra16,Tabatabaei17}.

Recently, it has been shown that single-channel BSS could be modelled by using sequence-to-sequence (seq2seq) learning with neural networks \cite{grais2014deep,huang2014deep,Du16}. In particular, it has been applied to energy disaggregation \cite{kelly2015neural} \textemdash both convolutional (CNN) and recurrent neural networks (RNN) were employed. The idea of sequence-to-sequence learning is to train a deep network to map between an input sequence, such as the mains power readings in the NILM problem,
and an output sequence, such as the power readings of a single appliance.
%Consequently, every appliance is represented as a non-linear regression of the mains. 
%From the view of probabilistic modelling, the power level of each appliance 
%is assumed to be drawn from a Gaussian distribution where the mean is the output of the neural network and the variance is a constant. 

A difficulty immediately arises when applying seq2seq in signal
processing applications such as BSS.  In these applications, the input
and output sequences can be long, for example, in one of our data
sets, the input and output sequences are 14,400 time steps.  Such long
sequences can make training both computationally difficult, both
because of memory limitations in current graphics processing units
(GPUs) and, with RNNs, because of the vanishing gradient problem.  A
common way to avoid these problems is a sliding window approach, that
is, training the network to map a window of the input signal to the
corresponding window of the output signal.  However, this approach has
several difficulties, in that each element of the output signal is
predicted many times, once for each sliding window; an average of
multiple predictions is naturally used, which consequently smooths the
edges.  Further, we expect that some of the sliding windows will
provide a better prediction of a single element than others;
particularly, those windows where the element is near the midpoint of
the window rather than the edges, so that the network can make use of
all nearby regions of the input signal, past and future.  But a simple
sliding window approach cannot exploit this information.

%2) A regression model is used for modelling the targets, which results in a deterministic prediction; this does not take into account the uncertainty in the predictions; 
%2) Neural networks are black boxes and it is difficult to understand exactly what information they have learnt.

In this paper, we propose a different architecture
called \emph{sequence-to-point learning (seq2point)} for
single-channel BSS. This uses a sliding window approach, but given a
window of the input sequence, the network is trained to predict the
output signal only at the midpoint of the window.  This has the effect
of making the prediction problem easier on the network, as rather than
needing to predict in total $W(T-W)$ outputs as in the seq2seq method,
where $T$ is the length of the input signal and $W$ the size of the
sliding window, the seq2point method predicts only $T$ outputs. This
allows the neural network to focus its representational power on the
midpoint of the window, rather than on the more difficult outputs on
the edges, hopefully yielding more accurate predictions.

% We propose that the midpoint is represented as a non-linear regression of the mains window (which is denoted by seq2point).
%2) the midpoint is quantized and therefore the problem could be transformed into a classification problem (which is denoted by ProbNet). 
%In our experiments, we employed CNN for our models. We also implemented RNN, but we found that CNN performed the best. Besides, it is well known that CNN has lower computational costs than RNN.

We provide both an analytical and empirical analysis of the methods,
showing that seq2point has a tighter approximation to the target
distribution than seq2seq learning.  On two different real-world NILM
data sets (UK-DALE \cite{kelly2014uk} and REDD \cite{kolter2011redd}),
we find that sequence-to-point learning performs dramatically better
than previous work, with as much as 83\% reduction in error.

Finally, to have confidence in the models, it is vital to interpret
the model predictions and understand what information the neural
networks for NILM are relying on to make their predictions.  By
visualizing the feature maps learned by our networks, we found that
our networks automatically extract useful features of the input
signal, such as change points, and typical usage durations and power
levels of appliances.  Interestingly, these signatures have been
commonly incorporated into handcrafted features and architectures for
the NILM problem
\cite{kolter2012approximate,Parson12,pattem2012unsupervised,Zhao15,zhong2014signal,zhong2015latent,Batra16,Tabatabaei17},
but in our work these features are learned automatically.

\section{Energy disaggregation}
The goal of energy disaggregation is to recover the energy consumption
of individual appliances from the mains signal, which measures the
total electricity consumption.  Suppose we have observed the mains $Y$
which indicates the total power in Watts in a household, where $Y =
(y_1, y_2, ... ,y_T)$ and $y_t \in R_{+}$. Suppose there are a number
of appliances in the same house. For each appliance, its reading is
denoted by $X_i = (x_{i1}, x_{i2}, ..., x_{iT})$, where $x_{it} \in
R_{+}$. At each time step, $y_t$ is assumed to be the sum of the
readings of individual appliances, possibly plus a Gaussian noise
factor with zero mean and variance $\sigma^2$ such that $ y_t
= \sum_{i} x_{it} + \epsilon_t $.  Often we are only interested in $I$
appliances, i.e., the ones that use the most energy; others will be
regarded as an unknown factor $U =(u_1,\cdots,u_T)$. The model
could then be represented as $y_t = \sum_{i=1}^I x_{it} + u_t
+ \epsilon_t$.

%The prior of $\mathbf{u}$ is assumed to be
%$p(\mathbf{u}) \propto \prod_{t=1}^{T-1}\exp(\lambda
%|u_{t+1}-u_{t}|)$. The task of energy disaggregation is to infer
%$X_i$ given $Y$.

The additive factorial hidden Markov model (AFHMM) is a natural approach to represent this model  \cite{kolter2012approximate,pattem2012unsupervised,zhong2014signal}. Various inference algorithms could then be employed to infer 
the appliance signals $\{ X_i \}$ \cite{kolter2012approximate,zhong2014signal,Kiarash16}. It is well-known that the problem is still unidentifiable. 
 To tackle the identifiability problem, various approaches have been proposed by incorporating domain knowledge into the model. For example, local information, e.g., appliance power levels, ON-OFF state changes, and durations, has been incorporated into the model \cite{kolter2012approximate,Parson12,pattem2012unsupervised,Zhao15,Tabatabaei17}; others have incorporated global information, e.g., total number of cycles and total energy consumption \cite{zhong2014signal,zhong2015latent,Batra16}. However, the domain knowledge required by these methods needs to be extracted manually, which makes the methods more difficult to use.
 As was previously noted, all these approaches require handcrafted features based on the observation data. Instead, we propose to employ neural networks to extract those features automatically during learning.

\section{Sequence-to-sequence learning}

\citep{kelly2015neural} have applied deep learning methods to NILM. The neural networks learns a nonlinear regression between a sequence of the mains
readings and 
a sequence of appliance readings {\em with the same time stamps}. 
We will refer to this as a \emph{sequence-to-sequence} approach. Although RNN architectures
are most commonly used in sequence-to-sequence
learning for text \cite{sutskever14}, 
for NILM \citep{kelly2015neural} employ
both CNNs and RNNs. 
Similar sequence-to-sequence neural network approaches have been applied
 to single-channel BSS problems in audio and
 speech \cite{grais2014deep,huang2014deep,Du16}.
 
Sequence-to-sequence architectures define a neural
network $F_s$ that maps sliding windows $Y_{t:t+W-1}$ of the input
mains power to corresponding windows $X_{t:t+W-1}$ of the output
appliance power, that is, they model $X_{t:t+W-1} = F_s(Y_{t:t+W-1})
+ \epsilon$, where $\epsilon$ is $W$-dimensional Gaussian random
noise.
%For the sequence-to-sequence learning, it assumes that the approximate posterior distribution to $\pi(X|Y)$ is conditionally independent such that
%\begin{equation}
%%\begin{aligned}
%p(X_{1:W}|Y_{1:W},\theta) = \prod_{t=1}^{T} p(x_t|Y\theta).
%%\end{aligned}
%\end{equation}
%Note that the parameters $\theta$ are shared across all the independent distributions.
Then, to train the network on a pair $(X,Y)$ of 
full sequences, the loss function is
\begin{equation}
\label{seq2seq}
	L_s = \sum_{t=1}^{T-W+1} \log p(X_{t:t+W-1} | Y_{t:t+W-1}, \theta_s),
\end{equation}
where $\theta_s$ are the parameters of the network $F_s$.
%\MZ{I updated above the subscripting so the windows have a time index, I hope this makes it more precise - nhg.}
In practice, a subset of all possible windows can be used during
 training in order to reduce computational complexity.
 
% Given a mains sequence
% $\widetilde{Y}_{1:\widetilde{T}}$, for any $1\leq
% t\leq\widetilde{T}-W$, the predicted output is
% $\widetilde{X}_{t:t+W-1} = F(\widetilde{Y}_{t:t+W-1}|\theta_s)$. {
% Since
Since there are multiple predictions for ${x}_t$ when $2\leq
 t\leq{T}-1$, one for each sliding window that contains time $t$,
 the mean of these predicted values is used as the
 prediction result.  It has been shown that this neural network
 approach outperforms AFHMMs for the NILM task. 

\section{Sequence-to-point learning}

\begin{figure*}[htbp!]
\centering
\includegraphics[width=0.9\textwidth]{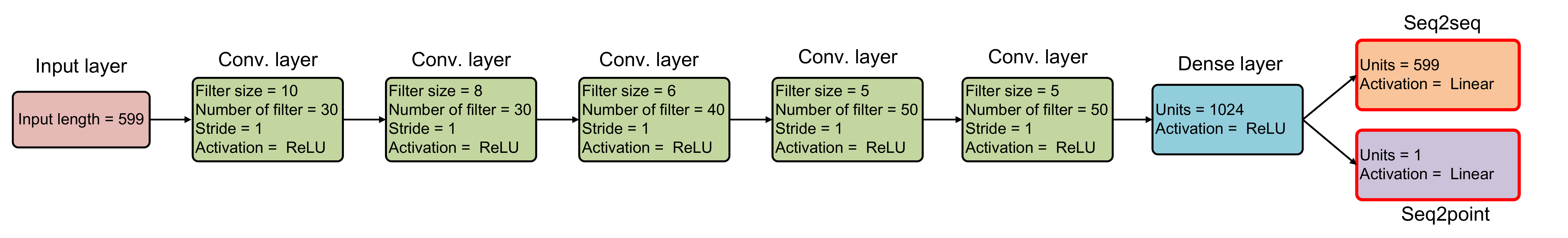}%
\vspace{-1.5em}\caption{\label{Fig:cnn} The architectures for sequence-to-point and sequence-to-sequence neural networks.}
\end{figure*}

%\subsection{Data discretization}
Instead of training a network to predict a window of appliance readings, we propose to train a neural network to only predict the midpoint element of that window. The idea is that the input of the network is a mains window
$Y_{t:t+W-1}$, and the output is the midpoint element $x_\tau$ of the corresponding window of the target appliance,
where $\tau = t + \floor{W/2}$.
 We call this type of method a sequence-to-point learning method, which is widely applied for modelling the distributions of speech and image \cite{sainath15,oord2016wavenet,oord2016pixel}. This method assumes that the midpoint element is represented as a non-linear regression of the mains window. The intuition behind this assumption is that we expect the state of the midpoint element of that appliance should relate to the information of mains before and after that midpoint. We will show explicitly in the experiments that the change points (or edges) in the mains are among
the features that the network uses to infer the states of the appliance.

%; this method assumes that the appliance readings are continuous. However, in the real world, all the electricity readings tend to be stored in integer with a lower unit, e.g., Joule. It has at least two advantages \textemdash using a lower unit makes the reading to be more accurate and integers need less disk space. Based on this fact, every appliance has a finite number of states. Therefore, alternatively the midpoints could be represented as classification outputs of a neural network with the mains window being the input (denoted by ProbNet). 

Instead of mapping sequence to sequence, the seq2point architectures
define a neural network $F_p$ which maps sliding windows $Y_{t:t+W-1}$ of
the input to the midpoint $x_{\tau}$ of the corresponding windows
$X_{t:t+W-1}$ of the output. The model is
$x_{\tau}=F_p(Y_{t:t+W-1})+\epsilon$. The loss function has the following
form for training the network:
\begin{equation}
\label{seq2point}
L_p = \sum_{t=1}^{T-W+1} \log p(x_{\tau} | Y_{t:t+W-1}, \theta_p).
\end{equation}
where $\theta_p$ are the network parameters. 
 To deal with
the endpoints of the sequence, given a full 
input sequence $Y = (y_1 \ldots y_T)$, we first pad the sequence
with $\ceil{W/2}$ zeros at the beginning and end. 
The advantage of the seq2point model is that 
there is a single prediction for every $x_t$, rather than
an average of predictions for each window.

\subsection{Architectures}
\citep{kelly2015neural} showed that denoising autoencoders performed better than other architectures for seq2seq learning. Instead, we propose to employ the same convolutional neural network for seq2seq and seq2point learning in this paper, as we will show in the following section that both approaches can have the same architecture, shown in Figure \ref{Fig:cnn}. \citep{kelly2015neural} generated the training data heuristically. In contrast, we use all the sliding windows for both methods for training, thus not requiring heuristic rules for generating training data.

\begin{table*}[tp]
\centering
\caption{\label{on} The parameters used in this paper for each appliance. Power unit is Watt.}
\label{my-label}
\begin{tabular}{c|c|c|c|c|c }
& Kettle & Microwave & Fridge & Dish Washer & Washing Machine \\ \hline
Window length (point) & 599 &599 & 599 & 599 & 599  \\
Maximum power & 3948 & 3138 & 2572 & 3230 & 3962 \\
On power threshold & 2000 & 200 & 50 & 10& 20 \\
Mean on power  & 700 & 500 & 200 & 700 & 400 \\
Standard deviation on power & 1000 & 800 & 400 & 1000 & 700
\end{tabular}
\end{table*}

\subsection{Posterior distribution estimators}
In this section we show that both seq2seq and seq2point learning are essentially posterior density estimators. Suppose $T\rightarrow\infty$, then we could have infinite number of sliding windows. They inherently form a population distribution $\pi(X|Y)$ which is unobserved, where $X$ and $Y$ are random temporally-aligned vectors of length $W$. Seq2seq tries to find $\theta$ to maximise $p(X|Y,\theta)$ to approximate the population posterior $\pi(X|Y)$. This could be achieved by minimizing Kullback-Leibler (KL) divergence with respect to $\theta$
\begin{equation}
\min_{\theta}KL(\pi||p)=\min_{\theta}\int \pi(X|Y)\log{\frac{\pi(X|Y)}{p(X|Y,\theta)}}dX.\nonumber
\end{equation}
So we have the standard interpretation that both methods are minimizing
a Monte Carlo approximation to KL-divergence.  

Now we will characterize the difference between seq2seq and seq2point learning. 
If we assume a factorizable form such that $p(X|Y,\theta)=\prod_{w=1}^Wp_w(x_w|Y,\theta)$, the objective function can then be represented as
\begin{equation}
KL(\pi||p)=\sum_{w=1}^WKL(\pi(x_w|Y)||p_w(x_w|Y,\theta)).\nonumber
\end{equation}
Now denote $\phi_w(\theta|Y)=KL(\pi(x_w|Y)||p_w(x_w|Y,\theta))$ which is a function of $\theta$ given $Y$.
%This indicates that when the approximation distribution has a factorable form, minimizing the KL divergence for the joint distribution $KL(\pi(X|Y)||p(X|Y,\theta))$ is equivalent to minimizing the sum of KL divergences for the marginal distributions $KL(\pi(x_t|Y)||p_t(x_t|Y,\theta))$.

Seq2seq learning assumes the following distribution
\begin{align}
p(X|Y,\theta) & =\mathcal{N}(\mu(\theta),cI) =\prod_{w=1}^W\mathcal{N}(\mu_w(\theta),c)\nonumber \\
&=\prod_{w=1}^Wp_w(x_w|Y,\theta)\nonumber
\end{align}
where $\mu(\theta)=(\mu_1(\theta),\cdots,\mu_W(\theta))^T$, $c$ is a constant, and $I$ is the identity matrix. All the distributions $p_w$ partially share the same parameters $\theta$ except the parameters from the last hidden layer to outputs, and therefore, optimization needs to be performed jointly over all the distributions $p_w$ ($w=1,2,\cdots,W$) such that
\begin{equation}
\min_{\theta}\sum_{w=1}^W\phi_w(\theta|Y).\nonumber
\end{equation}
Denote $\widetilde{\theta}$ as the optimum, the approximate distribution of the midpoint value is then $p_{\tau}(x_{\tau}|Y,\widetilde{\theta})$ by using seq2seq, and the corresponding KL-divergence for the midpoint value is $\phi_{\tau}(\widetilde{\theta}|Y)=KL(\pi(x_{\tau}|Y)||p_{\tau}(x_{\tau}|Y,\widetilde{\theta}))$.

 The seq2point learning directly models the midpoint value, and therefore, the optimization over $p(x_{\tau}|Y,\theta)$ can be performed by the following problem
\begin{equation}
\min_{\theta}\phi_{\tau}(\theta|Y).\nonumber%, (\textrm{for all } t=1,2,\cdots,T)
\end{equation}
Denote $\theta^*$ as the optimum, the approximate distribution of the midpoint value is then $p_{\tau}(x_{\tau}|Y,{\theta}^*)$. 
This shows that seq2seq and seq2point infer two different approximate distributions respectively to the same posterior distribution for midpoint value. 

The following theorem shows that seq2point learning infers a tighter approximation to the target distribution than the seq2seq learning when they use the same architecture.
\begin{theorem}
Assume all the distributions are well-defined. Suppose both the seq2point and seq2seq learning have the same architecture. Suppose $\theta^*$ is the optimum of the seq2point model, and $\widetilde{\theta}$ is the optimum of the seq2seq model. Then $\phi_{\tau}(\theta^*|Y)\leq \phi_{\tau}(\widetilde{\theta}|Y)$.
\end{theorem}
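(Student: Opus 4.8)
The plan is to reduce the statement to the defining property of a global minimizer. The key observation is that, under the shared-architecture hypothesis, the seq2point objective is \emph{literally} $\phi_{\tau}(\theta|Y)$ minimized over the same parameter set in which $\widetilde{\theta}$ lives, so $\widetilde{\theta}$ is a feasible — but in general suboptimal — point for the seq2point problem, and the inequality follows by specializing optimality of $\theta^*$ to that point.

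First I would make the parameter bookkeeping explicit. Because both networks use the same architecture, I write $\theta=(\theta_0,\theta_1,\ldots,\theta_W)$, where $\theta_0$ collects all weights shared up through the last hidden layer and $\theta_w$ is the final head that produces output $w$ (this is exactly the structure described before the theorem, where the $p_w$ share parameters except those from the last hidden layer to the outputs). By construction $p_{\tau}(x_{\tau}|Y,\theta)$, and hence $\phi_{\tau}(\theta|Y)$, depends on $\theta$ only through $(\theta_0,\theta_{\tau})$. The seq2point network has the same architecture but a single output head, so its parameters are precisely $(\theta_0,\theta_{\tau})$ ranging over the same admissible set, and its training problem $\min_{\theta}\phi_{\tau}(\theta|Y)$ is therefore an unconstrained minimization of $\phi_{\tau}$ over all attainable $(\theta_0,\theta_{\tau})$.

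Next I would invoke optimality directly. By definition, $\theta^*$ attains the global minimum of $\phi_{\tau}(\cdot|Y)$ over the whole admissible parameter set, so $\phi_{\tau}(\theta^*|Y)\leq\phi_{\tau}(\theta|Y)$ for every admissible $\theta$. The seq2seq optimum $\widetilde{\theta}$ minimizes the \emph{different} objective $\sum_{w=1}^{W}\phi_w(\theta|Y)$, the sum over all $W$ coordinates; there is no reason its induced midpoint head should also minimize $\phi_{\tau}$ in isolation, but it is certainly an admissible value of $(\theta_0,\theta_{\tau})$. Specializing the optimality inequality to $\theta=\widetilde{\theta}$ yields $\phi_{\tau}(\theta^*|Y)\leq\phi_{\tau}(\widetilde{\theta}|Y)$, which is exactly the claim. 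The well-definedness assumption is what guarantees the two minimizers exist, so that the comparison is meaningful.

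The mathematical content is thus essentially a one-line consequence of ``minimizing a single term is no worse than reading that term off the minimizer of a sum of terms.'' The only place that genuinely requires care — and the step I expect to be the real obstacle — is justifying that the two optimizations range over the same parameter set, i.e., that the shared-architecture hypothesis truly makes $\widetilde{\theta}$ feasible for the seq2point problem. If seq2point imposed extra constraints, or if the shared backbone $\theta_0$ were coupled differently across the two models, $\widetilde{\theta}$ might not be admissible and the inequality could fail. So the heart of the write-up is pinning down this feasibility and the dependence of $\phi_{\tau}$ on only $(\theta_0,\theta_{\tau})$, rather than performing any estimate on the KL divergences themselves.
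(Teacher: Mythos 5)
Your proposal is correct and uses essentially the same argument as the paper: both reduce the claim to the defining property of the global minimizer $\theta^*$ of $\phi_{\tau}(\cdot|Y)$, evaluated at the admissible point $\widetilde{\theta}$. In fact your version is the cleaner one --- the paper reaches the same conclusion by adding and then subtracting $\sum_{w\neq\tau}\phi_w(\theta|Y)$ (a mathematically vacuous detour), whereas you apply the optimality inequality directly and, helpfully, make explicit the shared-parameter-space point that the paper's phrase ``the functions $\phi_w$ for the two methods are the same'' leaves implicit.
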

\begin{proof}
It is natural to assume that all the distributions are well-defined, and thus KL-divergence has a lower bound $0$ such that $\phi_w\geq 0$. Since both learning methods have the same architecture, the functions $\phi_w$ for the two methods are the same. Since $\theta^*$ is the optimum of the problem $\min_{\theta}\phi_{\tau}(\theta|Y)$, for any $\theta$, $\phi_{\tau}(\theta^*|Y)\leq \phi_{\tau}(\theta|Y)$. So $\phi_{\tau}(\theta^*|Y) + \sum_{w=1,w\neq \tau}^W\phi_w(\theta|Y)\leq \sum_{w=1}^W\phi_w(\theta|Y)$ is true for any $\theta$. Therefore, $\phi_{\tau}(\theta^*|Y) + \sum_{w=1,w\neq \tau}^W\phi_w(\widetilde{\theta}|Y)\leq \sum_{w=1}^W\phi_w(\widetilde{\theta}|Y)$. Consequently, $\phi_{\tau}(\theta^*|Y)\leq \phi_{\tau}(\widetilde{\theta}|Y)$.
\end{proof}
This theorem ensures that seq2point learning always provides a tighter approximation than seq2seq learning. 

\begin{table*}[tp]
\centering
\caption{The appliance-level mean absolute error (MAE) (Watt) and signal aggregate error (SAE) for UK-DALE data. Best results are shown in bold. Seq2seq(Kelly) is proposed in \cite{kelly2015neural}.}
\label{ukdale}
\begin{tabular}{c|c|cccccc}
Error measures&\textbf{Methods}   & Kettle & Microwave& Fridge & Dish w. & Washing m. & Overall     \\
\hline
\multirow{4}{*}{MAE}&AFHMM & 47.38& 21.18& 42.35 &199.84 &103.24 &82.79 $\pm$ 64.50\\
&seq2seq(Kelly)                       & 13.000 & 14.559 & 38.451& 237.96& 163.468& 93.488  $\pm$ 91.112    \\
&seq2seq(this paper)                          & 9.220 & 13.619  & 24.489 & 32.515 & \textbf{10.153} & 17.999 $\pm$ 9.063  \\
&seq2point(this paper)                          & \textbf{7.439} & \textbf{8.661} & \textbf{20.894} & \textbf{27.704} & 12.663& $\mathbf{15.472  \pm 7.718}$    \\

\hline
\multirow{4}{*}{SAE}&AFHMM & 1.06 &1.04 &0.98& 4.50& 8.28& 3.17 $\pm$ 2.88\\
&seq2seq(Kelly)                       & 0.085 & 1.348 & 0.502& 4.237& 13.831& 4.001  $\pm$ 5.124    \\
&seq2seq(this paper)                           & 0.309 & \textbf{0.205}  & 0.373 & 0.779 & 0.453 & 0.423 $\pm$ 0.194  \\
&seq2point(this paper)                         & \textbf{0.069} & 0.486 & \textbf{0.121} & \textbf{0.645} & \textbf{0.284} & $\mathbf{0.321  \pm 0.217}$    \\

%ProbNet                    & \textbf{5.941} & \textbf{8.625} & 24.17& 32.560 & 11.707& 16.601 $\pm$ 10.131     \\
%RatioNet-MSE          & 6.468 & 9.828& \textbf{20.311} & \textbf{27.261}&  14.189 & 15.651 $\pm$ 7.392     \\
%RatioNet-CE              & 6.221& 8.993 & 23.958& 30.432&  13.849& 16.703 $\pm$ 9.104     \\
%RatioNet-MSE+CE    & 7.323& 9.903 & 23.920& 29.025& 14.695& 16.973 $\pm$ 8.268     \\
%Reg-5                        & 14.239& 17.541 & 27.806& 30.439& 18.330& 21.671 $\pm$ 6.292     \\
%ProbNet+AFHMM (This paper)         & 0.54 & 2.05& 0.42 & 0.20 & 0.12 & 0.67  $\pm$ 0.71
\end{tabular}
\end{table*}

\begin{table*}[tp]
\centering
\caption{The appliance-level mean absolute error (MAE) (Watt) and signal aggregate error (SAE) for REDD data. Best results are shown in bold.}
\label{redd}
\begin{tabular}{c|c|cccccc}
Error measures&\textbf{Methods}   & Microwave& Fridge & Dish w. & Washing m. & Overall     \\
\hline
\multirow{2}{*}{MAE}&seq2seq(this paper)                          &  33.272  & 30.630 & \textbf{19.449} & {22.857} & 26.552 $\pm$ 5.610  \\
&seq2point(this paper)                          & \textbf{28.199} & \textbf{28.104} & {20.048} & \textbf{18.423}& $\mathbf{23.693  \pm 4.494}$    \\
\hline
\multirow{2}{*}{SAE}&seq2seq(this paper)                           & {0.242}  & \textbf{0.114} & \textbf{0.557} & 0.509 & 0.355 $\pm$ 0.183  \\
&seq2point(this paper)                          & \textbf{0.059} & {0.180} & \textbf{0.567} & \textbf{0.277} & $\mathbf{0.270  \pm 0.187}$    \\
%ProbNet                    & \textbf{5.941} & \textbf{8.625} & 24.17& 32.560 & 11.707& 16.601 $\pm$ 10.131     \\
%RatioNet-MSE          & 6.468 & 9.828& \textbf{20.311} & \textbf{27.261}&  14.189 & 15.651 $\pm$ 7.392     \\
%RatioNet-CE              & 6.221& 8.993 & 23.958& 30.432&  13.849& 16.703 $\pm$ 9.104     \\
%RatioNet-MSE+CE    & 7.323& 9.903 & 23.920& 29.025& 14.695& 16.973 $\pm$ 8.268     \\
%Reg-5                        & 14.239& 17.541 & 27.806& 30.439& 18.330& 21.671 $\pm$ 6.292     \\
%ProbNet+AFHMM (This paper)         & 0.54 & 2.05& 0.42 & 0.20 & 0.12 & 0.67  $\pm$ 0.71
\end{tabular}
\end{table*}

\begin{figure*}[t!]
\centering
\includegraphics[width=0.9\textwidth]{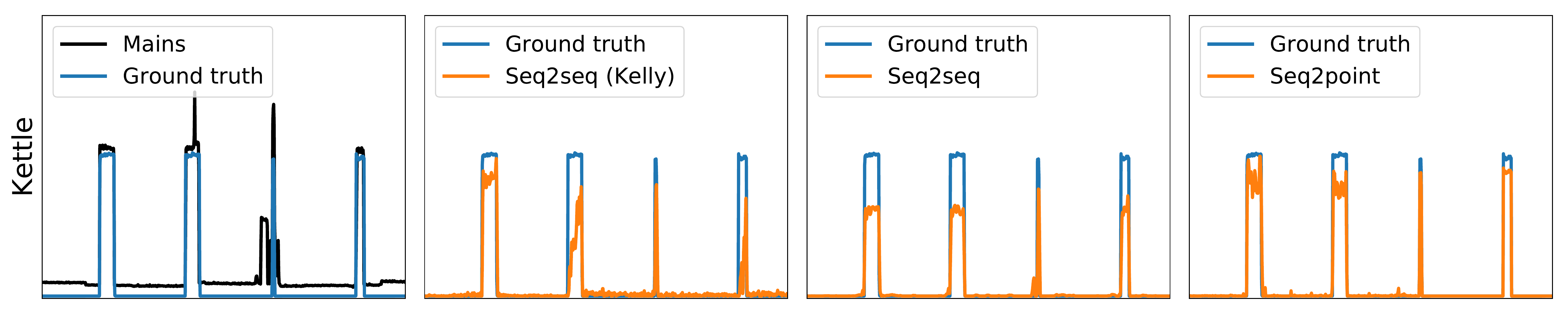}
\includegraphics[width=0.9\textwidth]{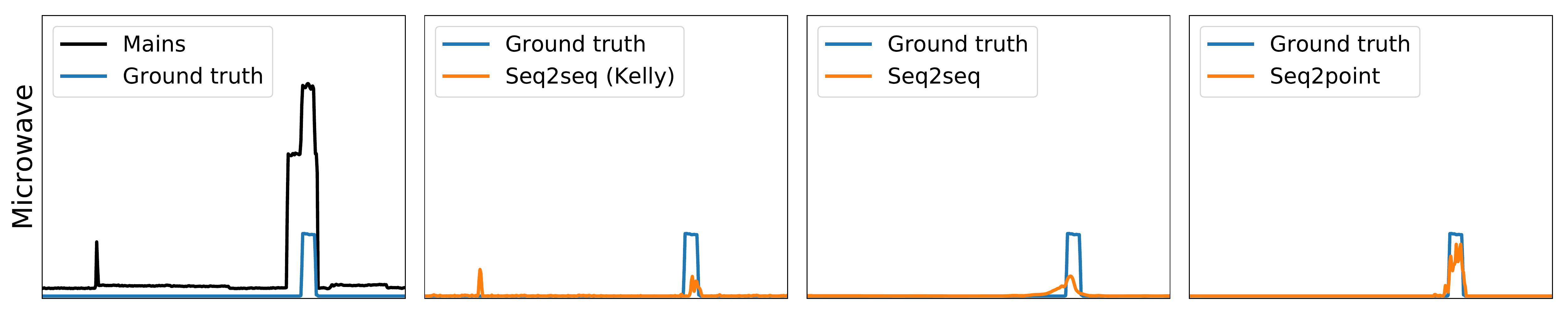}
\includegraphics[width=0.9\textwidth]{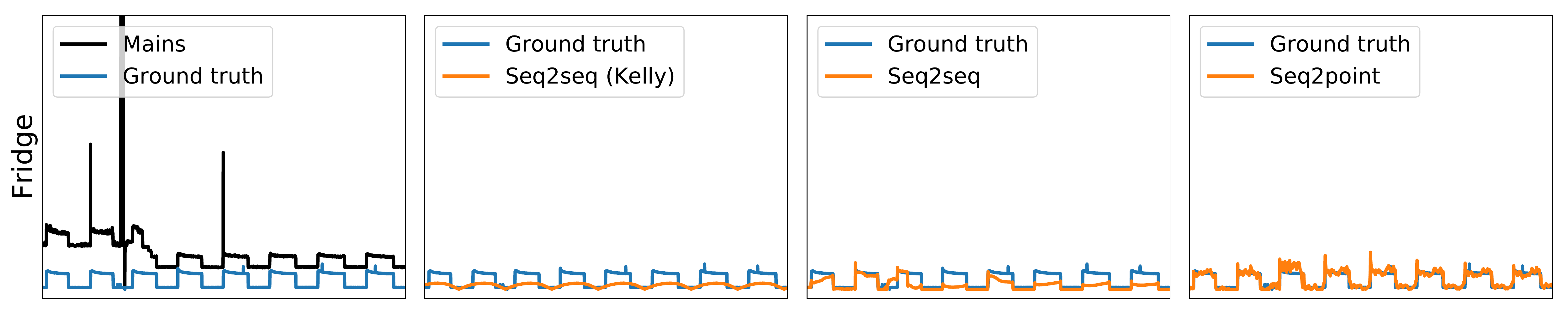}
\includegraphics[width=0.9\textwidth]{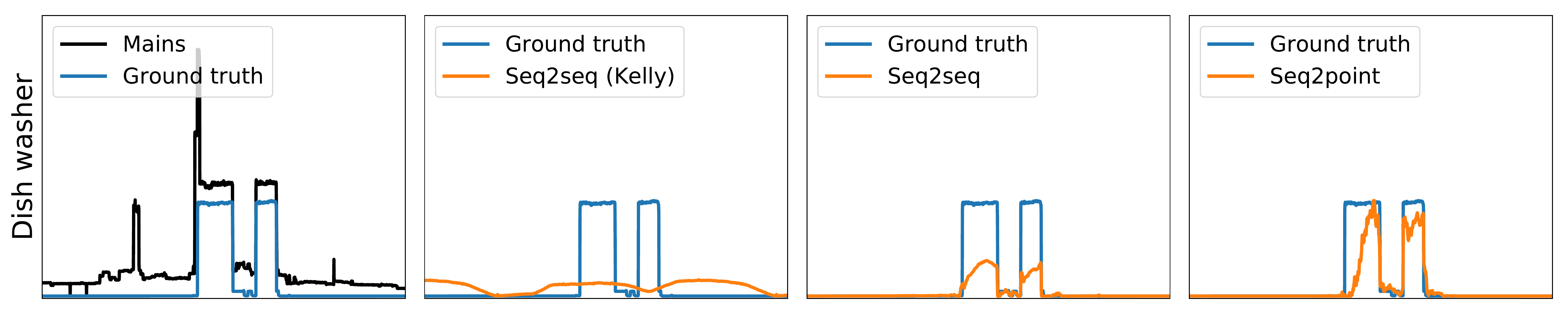}
\includegraphics[width=0.9\textwidth]{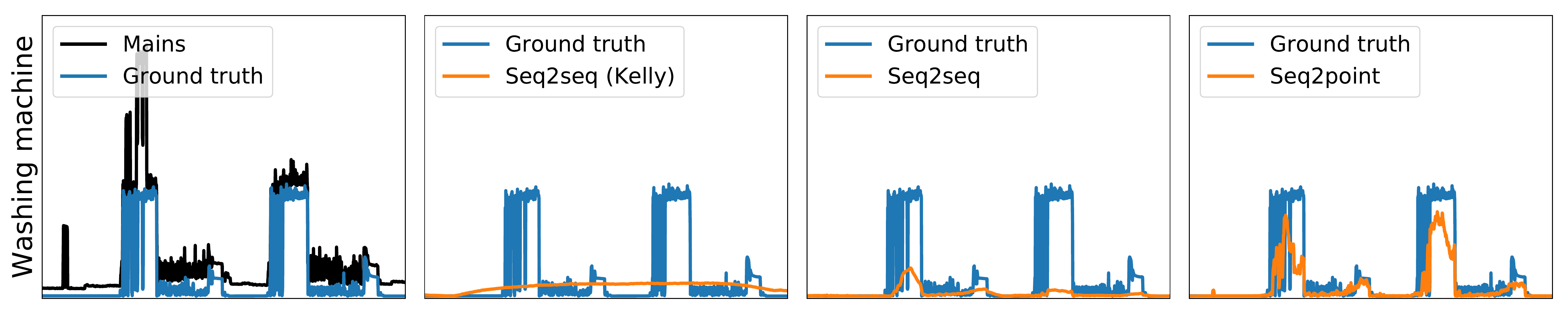}
\caption{\label{results} Some example disaggregation results on UK-DALE. Both seq2seq and seq2point are the methods proposed in this paper. Seq2seq(Kelly) is proposed in \cite{kelly2015neural}.} %\MZ{Which data set is this from?}}
\end{figure*}

\section{Experiments}
\label{experiment}

We compare four different models for the energy disaggregation problem, namely, the AFHMM 
\cite{kolter2012approximate}, seq2seq(Kelly) \cite{kelly2015neural}, seq2seq, and seq2point. Note that seq2seq and seq2point use the same architecture (see Figure \ref{Fig:cnn}). There are two differences between the seq2seq proposed in this paper and the seq2seq(Kelly): 1) seq2seq uses the same training samples as seq2point where the samples were obtained by sliding the windows across all the data sequences; seq2seq(Kelly) uses selected windows obtained from all the data sequences, 
including some generated via data augmentation; 2) seq2seq uses a multilayer CNN architecture; seq2seq(Kelly) uses an autoencoder which includes a convolutional layer at each end. To verify the effectiveness and efficiency, we conduct comprehensive comparisons in terms of different performance metrics. The deep learning models are implemented in Python using TensorFlow. The networks were trained on machines with NVIDIA GTX 970 and NVIDIA GTX TITAN X GPUs. 
%\MZ{Did you use the author's implementation of the Kelley method or your own?}

\subsection{Data sets}
We report results on the UK-DALE \cite{kelly2014uk} and REDD \cite{kolter2011redd} data sets, which measured the domestic appliance-level energy consumption and whole-house energy usage of five UK houses and six US houses respectively. 
\subsubsection{UK-DALE data}
All the readings were recorded in every 6 seconds from November 2012 to January 2015. The dataset contains the measurements of over 10 types of appliances, however, in this paper we are only interested in kettle, microwave, fridge, dish washer and washing machine which are popular appliances for evaluating NILM algorithms. We used the houses 1, 3, 4, and 5 for training the neural networks, and house 2 as the test data, because only houses 1 and 2 have all these appliances \cite{kelly2015neural,zhong2015latent}. 
%Other unselected appliance are treated as unknown appliances $U$.
Note that we are therefore considering the transfer learning setting in which we train and test on different households. This setting has the challenge that the same type of appliance will vary in its power 
demands in different houses, but good performance in the transfer learning set-up is vital
to practical application of NILM methods.
\subsubsection{REDD data}
The appliance and mains readings were recorded in every 3 seconds and 1 second respectively. The data set contains
measurements from six houses. We used houses 2 to 6 for training, and house 1 for testing the algorithms, for similar reasons to those in \citep{kelly2015neural}.
 Since there is no kettle data, we only looked at microwave, fridge, dish washer and washing machine.

\subsection{Data preprocessing}
We describe how the training data were prepared for training the neural networks. A window of the mains was used as the input sequence; the window length for each appliance is shown in the Table \ref{on}. The training windows were obtained by sliding the mains (input) and appliance (output) readings one timestep at a time; for seq2point, the midpoint values of the corresponding appliance windows were used as the outputs. Both the input windows and targets were preprocessed by subtracting the mean values and dividing by the standard deviations (see these parameters in the Table \ref{on}). These data samples were used for training both the seq2seq and seq2point methods. The training samples for training seq2seq(Kelly) were obtained by the method described in \citep{kelly2015neural}.

\subsection{Performance evaluation}
We use two metrics to compare these approaches. Denote $x_t$  as the ground truth and  $\hat{x}_t$ the prediction of an appliance at time $t$. 
When we are interested in the error in power at every time point, we use the mean absolute error (MAE)
\begin{equation}
\begin{aligned}
\mbox{MAE} = \frac{1}{T}\sum_{t=1}^T |\hat{x}_t-x_t|.\nonumber
\end{aligned}
\end{equation}
This provides a measure of errors that is less affected by outliers, i.e. isolated predictions that are particularly
inaccurate.
%As defining the on-power threshold in Table \ref{on},
When we are interested in the total error in energy over a period, in this case, one day, we use the normalised signal aggregate error (SAE)
\begin{equation}
\begin{aligned}
\mbox{SAE} = \frac{|\hat{r}-r|}{r}\end{aligned},\nonumber
\end{equation}
where $r$ and $\hat{r}$ denote the ground truth and inferred total energy consumption of an appliance,
that is $r = \sum_t x_t$ and $\hat{r} = \sum_t \hat{x}_t.$ This measure is useful because a method
could be accurate enough for reports of daily power usage even if its per-timestep prediction
is less accurate.

\subsection{Experimental results}
%Having trained with 6-month data and tested on the data in 1 month, 
First, on the UK-DALE data, Table \ref{ukdale} shows that both the
seq2seq and seq2point methods proposed in the paper outperformed the
other two methods (AFHMM and seq2seq(Kelly).  Our seq2seq reduces MAE
by 81\% and SAE by 89\% overall compared to seq2seq(Kelly), with
improvements in MAE for every appliance --- this can be explained by
our use of deeper architectures.  Our seq2point method
outperformed our seq2seq method in three out of four appliances, and
overall --- matching the results we obtained in the theorem. Compared
to seq2seq(Kelly) our seq2point reduces MAE by 84\% and SAE by 92\%.
We show example disaggregations on this data set performed by the
three neural network methods in Figure \ref{results}.

Since AFHMM and seq2seq(Kelly) perform worse than our two methods on
UK-DALE, we only applied our seq2seq and seq2point method to the REDD
data set. The results are shown in Table \ref{redd}. We can see
that seq2point outperformed seq2seq in most of the appliances, and
overall seq2point performs better than seq2seq --- improving MAE by
11\% and SAE by 24\%, very close to the overall improvements on
UK-DALE.

%\begin{table}[!htbp]
%\centering
%\caption{\label{para} Number of parameters of the networks (averaged over the individual networks for each appliance).}\vspace{0.5ex}
%%\label{my-label}
%
%\begin{tabular}{c|r}
%     & \parbox{1in}{\centering Average number \\ of parameters} \\ \hline
%SeqNet  & 172,925,496                 \\
%PointNet & 7,192,355                  \\
%ProbNet  & 7,453,680
%\end{tabular}
%\end{table}

\subsection{Visualization of latent features}%What do the neural networks learn?}
\label{feature}

To validate the models, we would like to understand the reasons behind the network's predictions.
We expect that appliance signals have characteristic signatures that indicate when they are on.
For example, a kettle only has two states: ON and OFF, and when it is ON the power should be approximately $2,000-3,000$ Watts; 2) the approximate duration of the kettle when it is ON. This information could be enough to detect a kettle.
This information can greatly improve the performance of some algorithms \cite{Zhao15,Batra16,zhong2015latent}. 

Interestingly, we observed that the convolutional neural networks
proposed in this paper are inherently learning these signatures.
To test what the network has learnt, 
we take a window from the data, and manually modify it in ways
that we believe should affect the prediction of the network. In these
experiments we looked at the kettle which is easier to study because
there are less number of states. For each different input, we plotted
the feature maps of the last CNN layer in the
Figure \ref{Fig:analysis}. It is interesting that all the filters
detected the state changes of the appliances. More specifically, some
filters take the responsibility of detecting amplitude of the
appliance and as well as the state changes, but others only detect the
state changes. Figure \ref{Fig:analysis} (b) shows that when the
kettle was manually removed, the network suggests that the amplitude
of the signal and as well as the duration were not appropriate for a
kettle. Figure \ref{Fig:analysis} (c) shows that when the amplitude of
the kettle was double, the network detects the kettle which is
reasonable because both the duration and amplitude correspond to a
kettle. Figure \ref{Fig:analysis} (d) indicates that when the
amplitude of the kettle was manually reduced, the network suggests
there was no kettle. Figure \ref{Fig:analysis} (e) shows that when the
duration of the appliance usage was set too long ($>8$ minutes), the
network might suggest it was too long for a
kettle. Figure \ref{Fig:analysis} (f) shows that when there is no activation at the midpoint, the learnt signatures have the similar types to those in (a). 
%were consistent to (a) because the window were only slided. 
%\MZ{What is (f) showing?  The text here and the caption for the figure seem to be saying the network is doing the same as in (a), but that predicts non-zero power so what is different? --- nhg.}

\begin{figure*}[!htb]
\begin{center}
\includegraphics[width=0.8\textwidth]{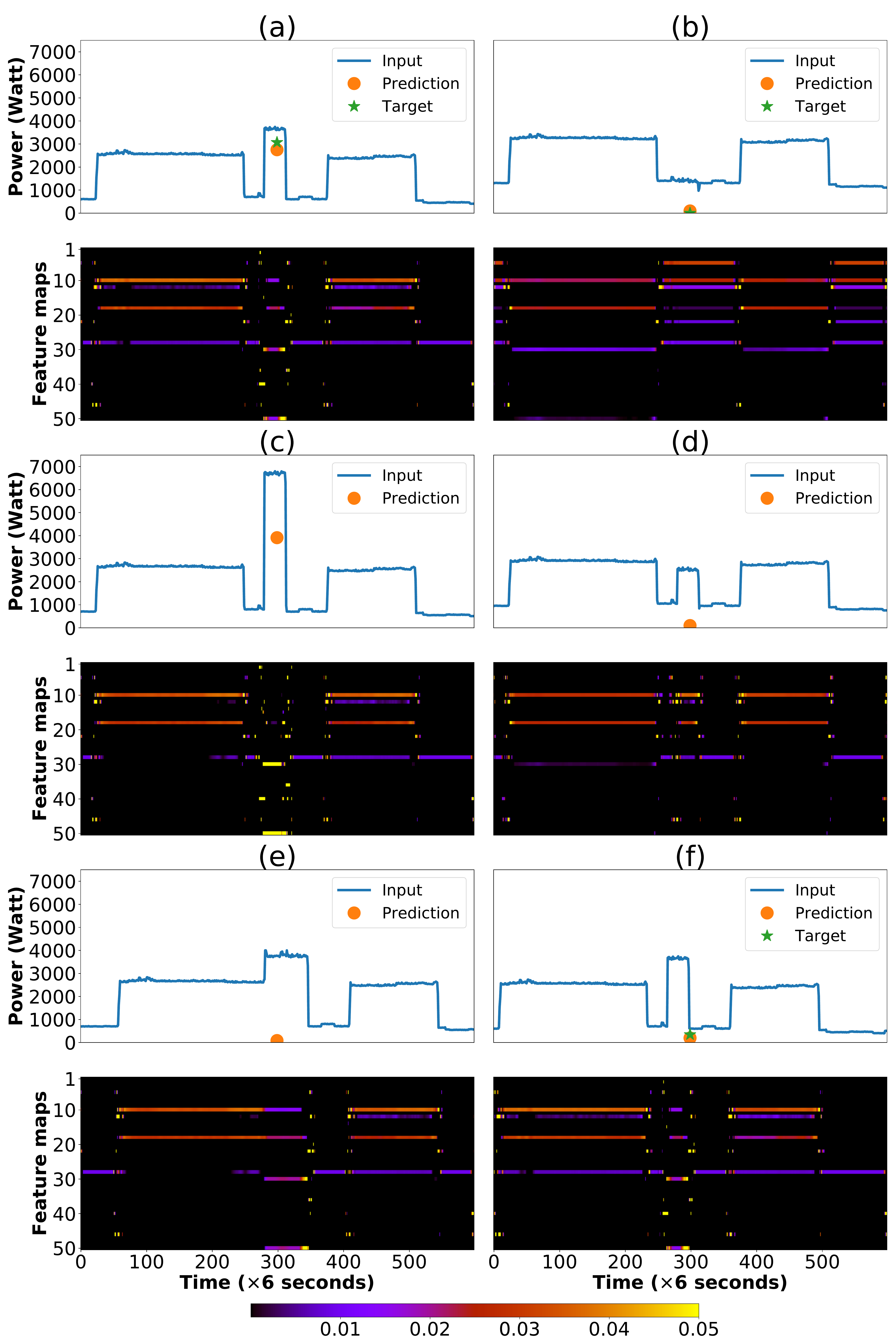}
\end{center}
\caption{\label{Fig:analysis} Feature maps learnt by the convolutional networks for various types of inputs (the mains). The feature maps contain the signatures of an appliance, which are used to predict the states of the appliance. These plots indicate that the learnt signatures are when an appliance is turned on and off (see the change points of the feature maps), the duration of an appliance when it is turned on, and the power level (yellow indicates higher power level). (a) The kettle is in the mains; the network detects the change points and power levels. (b) The kettle was manually removed from the mains;  comparing to (a), the change points and power levels were changed in the middle. (c) The power level of the kettle was set be double of the true level; comparing to (a), the detected power levels were increased in the middle. (d) The power level of the kettle was set to be half of the true level; comparing to (a), detected power levels were changed. (e) The duration of the kettle was set to be double; comparing to (a) the duration was changed. (f) Target has no activation at midpoint; the learnt signatures have the similar types to those in (a).}
\end{figure*}

\section{Conclusions}

We have proposed a sequence-to-point learning with neural networks for energy disaggregation. We have applied the proposed schemes to real world data sets. We have shown that sequence-to-point learning outperforms previous work using sequence-to-sequence learning. By visualizing the learnt feature maps, we have shown that the neural networks learn meaningful features from the data, which are crucial signatures for performing energy disaggregation. It would be interesting to apply the proposed methods to the single-channel blind source separation problems in other domains, for example, audio and speech.

%\newpage
%\bibliographystyle{named}
\bibliographystyle{aaai}
\bibliography{seq2pointNilm}

\begin{thebibliography}{}

\bibitem[\protect\citeauthoryear{Batra, Singh, and Whitehouse}{2016}]{Batra16}
Batra, N.; Singh, A.; and Whitehouse, K.
\newblock 2016.
\newblock Gemello: Creating a detailed energy breakdown from just the monthly
  electricity bill.
\newblock In {\em Proceedings of the 22nd ACM SIGKDD International Conference
  on Knowledge Discovery and Data Mining},  431--440.
\newblock ACM.

\bibitem[\protect\citeauthoryear{Du \bgroup et al\mbox.\egroup }{2016}]{Du16}
Du, J.; Tu, Y.; Dai, L.-R.; and Lee, C.-H.
\newblock 2016.
\newblock A regression approach to single-channel speech separation via
  high-resolution deep neural networks.
\newblock {\em IEEE/ACM Trans. Audio, Speech and Lang. Proc.} 24(8):1424--1437.

\bibitem[\protect\citeauthoryear{Fischer}{2008}]{fischer2008feedback}
Fischer, C.
\newblock 2008.
\newblock Feedback on household electricity consumption: a tool for saving
  energy?
\newblock {\em Energy efficiency} 1(1):79--104.

\bibitem[\protect\citeauthoryear{Grais, Sen, and Erdogan}{2014}]{grais2014deep}
Grais, E.~M.; Sen, M.~U.; and Erdogan, H.
\newblock 2014.
\newblock Deep neural networks for single channel source separation.
\newblock In {\em 2014 IEEE International Conference on Acoustics, Speech and
  Signal Processing (ICASSP)},  3734--3738.
\newblock IEEE.

\bibitem[\protect\citeauthoryear{Hart}{1992}]{hart92}
Hart, G.
\newblock 1992.
\newblock Nonintrusive appliance load monitoring.
\newblock {\em Proceedings of the IEEE} 80(12):1870 --1891.

\bibitem[\protect\citeauthoryear{Huang \bgroup et al\mbox.\egroup
  }{2014}]{huang2014deep}
Huang, P.-S.; Kim, M.; Hasegawa-Johnson, M.; and Smaragdis, P.
\newblock 2014.
\newblock Deep learning for monaural speech separation.
\newblock In {\em 2014 IEEE International Conference on Acoustics, Speech and
  Signal Processing (ICASSP)},  1562--1566.
\newblock IEEE.

\bibitem[\protect\citeauthoryear{Kelly and
  Knottenbelt}{2015a}]{kelly2015neural}
Kelly, J., and Knottenbelt, W.
\newblock 2015a.
\newblock Neural {NILM}: Deep neural networks applied to energy disaggregation.
\newblock In {\em Proceedings of the 2nd ACM International Conference on
  Embedded Systems for Energy-Efficient Built Environments},  55--64.
\newblock ACM.

\bibitem[\protect\citeauthoryear{Kelly and Knottenbelt}{2015b}]{kelly2014uk}
Kelly, J., and Knottenbelt, W.
\newblock 2015b.
\newblock The {UK-DALE} dataset, domestic appliance-level electricity demand
  and whole-house demand from five {UK} homes.
\newblock {\em Scientific Data} 2(150007).

\bibitem[\protect\citeauthoryear{Kolter and
  Jaakkola}{2012}]{kolter2012approximate}
Kolter, Z., and Jaakkola, T.~S.
\newblock 2012.
\newblock Approximate inference in additive factorial hmms with application to
  energy disaggregation.
\newblock In {\em AISTATS}, volume~22,  1472--1482.

\bibitem[\protect\citeauthoryear{Kolter and Johnson}{2011}]{kolter2011redd}
Kolter, J.~Z., and Johnson, M.~J.
\newblock 2011.
\newblock Redd: A public data set for energy disaggregation research.

\bibitem[\protect\citeauthoryear{Lange and
  Berg{\'e}s}{2016}]{lange2016efficient}
Lange, H., and Berg{\'e}s, M.
\newblock 2016.
\newblock Efficient inference in dual-emission {FHMM} for energy
  disaggregation.
\newblock In {\em Workshops at the Thirtieth AAAI Conference on Artificial
  Intelligence}.

\bibitem[\protect\citeauthoryear{Parson \bgroup et al\mbox.\egroup
  }{2012}]{Parson12}
Parson, O.; Ghosh, S.; Weal, M.; and Rogers, A.
\newblock 2012.
\newblock Non-intrusive load monitoring using prior models of general appliance
  types.
\newblock In {\em Proceedings of the Twenty-Sixth Conference on Artificial
  Intelligence (AAAI-12)},  356--362.

\bibitem[\protect\citeauthoryear{Pattem}{2012}]{pattem2012unsupervised}
Pattem, S.
\newblock 2012.
\newblock Unsupervised disaggregation for non-intrusive load monitoring.
\newblock In {\em Machine Learning and Applications (ICMLA), 2012 11th
  International Conference on}, volume~2,  515--520.
\newblock IEEE.

\bibitem[\protect\citeauthoryear{Sainath \bgroup et al\mbox.\egroup
  }{2015}]{sainath15}
Sainath, T.~N.; Kingsbury, B.; Saon, G.; Soltau, H.; rahman Mohamed, A.; Dahl,
  G.; and Ramabhadran, B.
\newblock 2015.
\newblock Deep convolutional neural networks for large-scale speech tasks.
\newblock {\em Neural Networks} 64:39 -- 48.

\bibitem[\protect\citeauthoryear{Shaloudegi \bgroup et al\mbox.\egroup
  }{2016}]{Kiarash16}
Shaloudegi, K.; Gy\"{o}rgy, A.; Szepesvari, C.; and Xu, W.
\newblock 2016.
\newblock {SDP} relaxation with randomized rounding for energy disaggregation.
\newblock In Lee, D.~D.; Sugiyama, M.; Luxburg, U.~V.; Guyon, I.; and Garnett,
  R., eds., {\em Advances in Neural Information Processing Systems 29}. Curran
  Associates, Inc.
\newblock  4979--4987.

\bibitem[\protect\citeauthoryear{Sutskever, Vinyals, and
  Le}{2014}]{sutskever14}
Sutskever, I.; Vinyals, O.; and Le, Q.
\newblock 2014.
\newblock Sequence to sequence learning with neural networks.
\newblock In {\em Advances in Neural Information Processing Systems (NIPS)}.

\bibitem[\protect\citeauthoryear{Tabatabaei, Dick, and Xu}{2017}]{Tabatabaei17}
Tabatabaei, S.~M.; Dick, S.; and Xu, W.
\newblock 2017.
\newblock Toward non-intrusive load monitoring via multi-label classification.
\newblock {\em IEEE Transactions on Smart Grid} 8(1):26--40.

\bibitem[\protect\citeauthoryear{van~den Oord \bgroup et al\mbox.\egroup
  }{2016}]{oord2016wavenet}
van~den Oord, A.; Dieleman, S.; Zen, H.; Simonyan, K.; Vinyals, O.; Graves, A.;
  Kalchbrenner, N.; Senior, A.; and Kavukcuoglu, K.
\newblock 2016.
\newblock Wavenet: A generative model for raw audio.
\newblock {\em arXiv preprint arXiv:1609.03499}.

\bibitem[\protect\citeauthoryear{van~den Oord, Kalchbrenner, and
  Kavukcuoglu}{2016}]{oord2016pixel}
van~den Oord, A.; Kalchbrenner, N.; and Kavukcuoglu, K.
\newblock 2016.
\newblock Pixel recurrent neural networks.
\newblock In {\em Proceedings of The 33rd International Conference on Machine
  Learning},  1747--1756.

\bibitem[\protect\citeauthoryear{Zhao, Stankovic, and Stankovic}{2015}]{Zhao15}
Zhao, B.; Stankovic, L.; and Stankovic, V.
\newblock 2015.
\newblock Blind non-intrusive appliance load monitoring using graph-based
  signal processing.
\newblock In {\em 2015 IEEE Global Conference on Signal and Information
  Processing (GlobalSIP)},  68--72.

\bibitem[\protect\citeauthoryear{Zhao, Stankovic, and Stankovic}{2016}]{Zhao16}
Zhao, B.; Stankovic, L.; and Stankovic, V.
\newblock 2016.
\newblock On a training-less solution for non-intrusive appliance load
  monitoring using graph signal processing.
\newblock {\em IEEE Access} 4:1784--1799.

\bibitem[\protect\citeauthoryear{Zhong, Goddard, and Sutton}{2013}]{zhong13}
Zhong, M.; Goddard, N.; and Sutton, C.
\newblock 2013.
\newblock Interleaved factorial non-homogeneous hidden {M}arkov models for
  energy disaggregation.
\newblock In {\em Neural Information Processing Systems, Workshop on Machine
  Learning for Sustainability}.

\bibitem[\protect\citeauthoryear{Zhong, Goddard, and
  Sutton}{2014}]{zhong2014signal}
Zhong, M.; Goddard, N.; and Sutton, C.
\newblock 2014.
\newblock Signal aggregate constraints in additive factorial {HMM}s, with
  application to energy disaggregation.
\newblock In {\em Advances in Neural Information Processing Systems},
  3590--3598.

\bibitem[\protect\citeauthoryear{Zhong, Goddard, and
  Sutton}{2015}]{zhong2015latent}
Zhong, M.; Goddard, N.; and Sutton, C.
\newblock 2015.
\newblock Latent {B}ayesian melding for integrating individual and population
  models.
\newblock In {\em Advances in Neural Information Processing Systems},
  3618--3626.

\end{thebibliography}

%\begin{figure*}[h]
%\begin{center}
%\includegraphics[width=1\textwidth]{dishwasher_out.pdf}
%\end{center}
%\caption{\label{Fig:fri} The example disaggregation results for dish washer.}
%\end{figure*}
%
%\begin{figure*}[h]
%\begin{center}
%\includegraphics[width=1\textwidth]{fridge_out.pdf}
%\end{center}
%\caption{\label{Fig:fri} The example disaggregation results for fridge.}
%\end{figure*}
%
%\begin{figure*}[h]
%\begin{center}
%\includegraphics[width=1\textwidth]{kettle_out.pdf}
%\end{center}
%\caption{\label{Fig:fri} The example disaggregation results for kettle.}
%\end{figure*}
%
%\begin{figure*}[h]
%\begin{center}
%\includegraphics[width=1\textwidth]{microwave_out.pdf}
%\end{center}
%\caption{\label{Fig:fri} The example disaggregation results for microwave.}
%\end{figure*}
%
%\begin{figure*}[h]
%\begin{center}
%\includegraphics[width=1\textwidth]{washingmachine_out.pdf}
%\end{center}
%\caption{\label{Fig:fri} The example disaggregation results for washing machine.}
%\end{figure*}
%
%\begin{figure*}[h]
%\begin{center}
%\includegraphics[width=1\textwidth]{kettle_act.pdf}
%\end{center}
%\caption{\label{Fig:fri} The example disaggregation results for kettle.}
%\end{figure*}
%
%\begin{figure*}[h]
%\begin{center}
%\includegraphics[width=1\textwidth]{kettle_act.pdf}
%\end{center}
%\caption{\label{Fig:fri} The example disaggregation results for kettle.}
%\end{figure*}
%
%\begin{figure*}[h]
%\begin{center}
%\includegraphics[width=1\textwidth]{kettle_nonact_samescale.pdf}
%\end{center}
%\caption{\label{Fig:fri} The example disaggregation results for kettle.}
%\end{figure*}

\end{document}